\newcommand{\NP}{{\sf NP}}
\newcommand{\ssi}{\subseteq_i}
\newcommand{\si}{\supseteq_i}
\title{Critical Vertices and Edges in $H$-free Graphs\thanks{Results in this paper appeared in extended abstracts in the proceedings of ISCO 2016~\cite{PPR16} and LAGOS~2017~\cite{PPR17b}.}}
\author{Dani\"el Paulusma\inst{1}\thanks{Author supported by the Leverhulme Trust (RPG-2016-258).}
\and Christophe Picouleau\inst{2} \and Bernard Ries\inst{3}}
\institute{Durham University, Durham, UK, \texttt{daniel.paulusma@durham.ac.uk}
\and
CNAM, Laboratoire CEDRIC, Paris, France, \texttt{christophe.picouleau@cnam.fr}
\and
University of Fribourg, Department of Informatics, Fribourg, Switzerland,  \texttt{bernard.ries@unifr.ch}}
\begin{document}
\maketitle
\setcounter{footnote}{0}

\begin{abstract}
A vertex or edge in a graph is critical if its deletion reduces the chromatic number of the graph by~1. 
 We consider the problems of deciding whether a graph has a critical vertex or edge, respectively. 
We give a complexity dichotomy for both problems restricted to $H$-free graphs, that is, graphs with no induced subgraph isomorphic to $H$. Moreover, we show that an edge is critical if and only if its contraction reduces the chromatic number by~1.
Hence, we also obtain a complexity dichotomy for the problem of deciding if a graph has an edge whose contraction reduces 
the chromatic number by~1.

\medskip
\noindent
{\bf Keywords.} edge contraction, vertex deletion, chromatic number.
\end{abstract}

\section{Introduction}\label{s-intro}

For a positive integer $k$, a {\it $k$-colouring} of a graph $G=(V,E)$ is a mapping $c: V\rightarrow\{1,2,\ldots,k\}$ such that no two end-vertices of an edge are coloured alike, that is,  $c(u)\neq c(v)$ if $uv\in E$.  
The {\it chromatic number} $\chi(G)$ of a graph $G$ is the  smallest integer $k$ for which $G$ has a $k$-colouring. 
The well-known {\sc Colouring} problem is to test if $\chi(G)\leq k$ for a given graph $G$ and integer $k$.
If $k$ is not part of the input, then we call this problem $k$-{\sc Colouring} instead. Lov\'asz~\cite{Lo73} proved that $3$-{\sc Colouring} 
is \NP-complete. 

Due to its computational hardness, the {\sc Colouring} problem has been well studied for special graph classes. We refer to
the survey~\cite{GJPS17} for an overview of the results on {\sc Colouring} restricted to graph classes characterized by one or two forbidden induced subgraphs. In particular, Kr\'al', Kratochv\'{\i}l, Tuza, and Woeginger \cite{KKTW01} classified {\sc Colouring} for {\it $H$-free graphs}, that is, graphs that do not contain a single graph $H$ as an induced subgraph. To explain their result we need the following notation.
For a graph $F$, we write $F\ssi G$ to denote that $F$ is an induced subgraph of a graph $G$.
The {\it disjoint union} of two graphs $G_1$ and $G_2$ is the graph $G_1+G_2$, which has vertex set $V(G)\cup V(H)$ and edge set $E(G)\cup E(H)$. We write $rG$ for the disjoint union of $r$ copies of $G$ by $rG$. The graphs $P_r$ and $C_r$ denote the induced path and cycle on $r$ vertices, respectively.
We can now state the theorem of Kr\'al et al.

\begin{theorem}[\cite{KKTW01}]\label{t-dicho}
Let $H$ be a graph.
If $H\ssi  P_4$ or $H\ssi P_1+ P_3$, then
{\sc Coloring} restricted to $H$-free graphs is polynomial-time solvable, otherwise it is \NP-complete. 
\end{theorem}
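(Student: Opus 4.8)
The plan is to treat the tractable cases and the hard cases separately, using throughout the elementary observation that $F\ssi F'$ implies that the class of $F$-free graphs is contained in the class of $F'$-free graphs (if a graph avoids $F$ it certainly avoids any graph that contains $F$ as an induced subgraph). Consequently, a polynomial-time algorithm established for $P_4$-free graphs or for $(P_1+P_3)$-free graphs automatically applies whenever $H\ssi P_4$ or $H\ssi P_1+P_3$; and, in the other direction, an \NP-completeness proof for {\sc Colouring} on the class of $F$-free graphs for some $F\ssi H$ immediately transfers to $H$-free graphs, since those hard instances are themselves $H$-free.

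For the positive side I would handle the two maximal tractable classes. The $P_4$-free graphs are exactly the cographs, and $\chi$ is computable in linear time along the cotree, since $\chi(G_1+G_2)=\max\{\chi(G_1),\chi(G_2)\}$ for a disjoint union while $\chi$ adds under a join. For $(P_1+P_3)$-free graphs I would first record two structural facts. If $G$ is disconnected, then no component can contain an induced $P_3$, as a vertex of another component would play the role of the $P_1$; hence every component is a clique and $\chi(G)=\omega(G)$. Moreover, for every vertex $v$ the non-neighbourhood $G-N[v]$ is $P_3$-free, i.e.\ a disjoint union of cliques. The connected case is then coloured in polynomial time by exploiting this local cluster structure. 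I expect this connected $(P_1+P_3)$-free algorithm to be the most delicate part of the tractability proof, precisely because these graphs need not be perfect (for instance $C_5$ is $(P_1+P_3)$-free), so the answer cannot simply be $\omega(G)$.

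For the hardness side I would split a \emph{bad} graph $H$ (one with $H\not\ssi P_4$ and $H\not\ssi P_1+P_3$) into three cases. If $H$ contains a cycle, then every graph of girth exceeding $|V(H)|$ is $H$-free, so I would invoke the \NP-completeness of $3$-{\sc Colouring} on graphs of arbitrarily large girth. If $H$ is a forest with a vertex of degree at least $3$, then $K_{1,3}\ssi H$ (its three neighbours are pairwise non-adjacent in a forest), so every claw-free graph is $H$-free; since line graphs are claw-free and {\sc Colouring} on line graphs is exactly computing the chromatic index, which is \NP-complete, this case follows. The remaining case, where $H$ is a linear forest, is the crux: a short case analysis shows that a bad linear forest must contain one of $4P_1$, $2P_1+P_2$, $2P_2$, or $P_5$ as an induced subgraph, so by the transfer principle it suffices to prove \NP-completeness of {\sc Colouring} on each of these four classes.

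Among these, the $4P_1$-free graphs are exactly those with independence number at most $3$; here a proper colouring of $G$ is the same as a partition of $\overline G$ into cliques, so computing $\chi(G)$ is the minimum clique partition problem on the $K_4$-free graph $\overline G$, and \NP-completeness follows from {\sc Partition into Triangles}. The reductions for $2P_2$, $2P_1+P_2$, and $P_5$ I expect to be the main obstacle on the hardness side: each requires a tailored gadget that encodes a hard colouring instance while provably avoiding the relevant induced linear forest, and these are the technically heaviest steps, to be carried out by reduction from a suitable restriction of {\sc Colouring} or $3$-{\sc Colouring}.
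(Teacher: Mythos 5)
This is Theorem~\ref{t-dicho}, which the paper imports from Kr\'al' et al.~\cite{KKTW01} without proof, so there is no in-paper argument to compare against; your proposal can only be judged against the known proof of that result. Your overall architecture is the right one and matches the standard approach: exploit monotonicity of $H$-freeness to reduce to the maximal tractable classes and the minimal hard graphs, handle $H$ with a cycle via large girth, handle forests with a degree-$3$ vertex via the claw and the \NP-completeness of chromatic index on line graphs, and reduce the linear-forest case to $4P_1$, $2P_1+P_2$ and $2P_2$ (your $P_5$ is redundant, since $2P_2\ssi P_5$). The case analysis identifying these as the minimal bad linear forests is correct, as are the $P_4$-free (cograph) algorithm and the observation that colouring $4P_1$-free graphs is clique cover of a $K_4$-free complement.

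However, as written the proposal has two genuine gaps, and they are exactly the two substantive components of the theorem. First, you give no algorithm for connected $(P_1+P_3)$-free graphs; the facts you record (disconnected implies a disjoint union of cliques; $G-N[v]$ is a cluster graph) do not by themselves yield one, and as you note $C_5$ shows $\chi$ need not equal $\omega$. The missing ingredient is structural: $G$ is $(P_1+P_3)$-free iff $\overline{G}$ is $\overline{P_1+P_3}$-free (paw-free), and by Olariu's theorem every connected paw-free graph is triangle-free or complete multipartite; translating back, $V(G)$ splits into a $3P_1$-free part $A$ and a $P_4$-free part $B$ that are complete to each other, so $\chi(G)=\chi(G[A])+\chi(G[B])$, where $\chi(G[A])$ is computable via a maximum matching in the triangle-free graph $\overline{G[A]}$ and $\chi(G[B])$ via the cotree. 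Second, the \NP-completeness for $2P_2$-free and $(2P_1+P_2)$-free graphs is only announced, not proved, and these are the technically hardest reductions in \cite{KKTW01}; there a single construction (a 1-in-3-SAT-type reduction to clique cover, essentially the gadget adapted in Section~\ref{s-linearforest} of this paper) produces instances that are simultaneously $(C_5,4P_1,2P_1+P_2,2P_2)$-free and so disposes of all three linear-forest cases at once, rather than requiring three tailored gadgets. Relatedly, your $4P_1$ case needs {\sc Partition into Triangles} (equivalently, the clique cover bound) to be hard on $K_4$-free instances specifically; this is true but must be verified. Until these pieces are supplied, the proposal is a correct plan rather than a proof.
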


For a vertex $u$ or edge $e$ in a graph $G$, we let $G-u$ and $G-e$ be the graph obtained from $G$ by deleting $u$ or $e$, respectively.
Note that such an operation may reduce the chromatic number of the graph by at most~1.
We say that $u$ or $e$ is {\it critical} if $\chi(G-u)=\chi(G)-1$ or $\chi(G-e)=\chi(G)-1$, respectively.  
A graph is vertex-critical if every vertex is critical and edge-critical if every edge is critical.
To increase our understanding of the {\sc Colouring} problem and to obtain certifying algorithms that solve {\sc Colouring} for special graph classes, vertex-critical and edge-critical graphs have been studied intensively in the literature, see for
instance~\cite{BHS09,CGSZ,CGSZ15,DHHMMP,GS15,HH13,HMRSV15,MM12} for certifying algorithms for (subclasses of) $H$-free graphs and in particular $P_r$-free graphs.

In this paper we consider the problems {\sc Critical Vertex} and {\sc Critical Edge}, which are to test if a graph has a critical vertex or critical edge, respectively.
In addition we also consider the edge contraction variant of these two problems. 
We let $G/e$ denote the graph obtained from $G$ after contracting $e=vw$, that is, after removing $v$ and $w$ and replacing them by a new vertex made adjacent to precisely those vertices adjacent to $v$ or $w$ in~$G$ (without creating multiple edges). Contracting an edge may reduce the chromatic number of the graph by at most~1.
An edge~$e$ is {\it contraction-critical} if $\chi(G/e)=\chi(G)-1$. This leads to the {\sc Contraction-Critical Edge} problem, which is to test if a graph has a contraction-critical edge.

\subsection{Our Results}

We prove the following complexity dichotomies for {\sc Critical Vertex}, {\sc Critical Edge} and {\sc Contraction-Critical Edge} restricted to $H$-free graphs. 

\begin{theorem}\label{t-critical}
If a graph $H\ssi  P_4$ or of $H\ssi P_1+ P_3$, then
{\sc Critical Vertex}, {\sc Critical Edge} and {\sc Contraction-Critical Edge} restricted to $H$-free graphs are polynomial-time solvable, otherwise they are \NP-hard or co-\NP-hard.
\end{theorem}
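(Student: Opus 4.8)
The plan is to prove the two directions of the dichotomy separately, after first disposing of the third problem by showing that it coincides with {\sc Critical Edge}. Indeed, for an edge $e=uv$ I claim $\chi(G/e)=\chi(G)-1$ if and only if $\chi(G-e)=\chi(G)-1$. If $\chi(G/e)=\chi(G)-1$, then colouring both $u$ and $v$ with the colour of the contracted vertex turns an optimal colouring of $G/e$ into a proper $(\chi(G)-1)$-colouring of $G-e$, since $u,v$ are non-adjacent there; so $\chi(G-e)\le\chi(G)-1$. Conversely, if $\chi(G-e)=\chi(G)-1$, then in \emph{every} such colouring of $G-e$ the vertices $u,v$ get the same colour---otherwise, as the only difference between $G$ and $G-e$ is the edge $uv$, it would properly colour $G$ with $\chi(G)-1$ colours---and any such colouring is a proper colouring of $G/e$, giving $\chi(G/e)\le\chi(G)-1$. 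In both cases the reverse inequality $\ge\chi(G)-1$ is automatic, as each operation lowers $\chi$ by at most one. Hence {\sc Critical Edge} and {\sc Contraction-Critical Edge} share their yes-instances and may be treated as one, so it suffices to settle {\sc Critical Vertex} and {\sc Critical Edge}.

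For the polynomial cases I would reduce to the two maximal classes: if $H\ssi P_4$ then every $H$-free graph is $P_4$-free, and if $H\ssi P_1+P_3$ then every $H$-free graph is $(P_1+P_3)$-free, and on both {\sc Colouring} is polynomial by Theorem~\ref{t-dicho}. Since $H$-free graphs are closed under vertex deletion, {\sc Critical Vertex} is immediate: compute $\chi(G)$ and each $\chi(G-u)$ and test whether some $\chi(G-u)=\chi(G)-1$. {\sc Critical Edge} is more delicate, because deleting an edge may create an induced $H$, so $G-e$ need not lie in the class. I would therefore test each edge through its contraction, using that $e$ is critical if and only if $G/e$ is $(\chi(G)-1)$-colourable, and read this off the structural (cotree, resp.\ modular) decomposition of cographs and of $(P_1+P_3)$-free graphs, which is controlled enough under identifying two adjacent vertices to keep the test polynomial.

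For the hardness cases I would first reduce to finitely many obstructions: the graphs $H$ with $H\not\ssi P_4$ and $H\not\ssi P_1+P_3$ are exactly those containing one of $C_3$, $C_4$, $C_5$, the claw $K_{1,3}$, $4P_1$, $2P_1+P_2$, or $2P_2$ as an induced subgraph, which a short case analysis (does $H$ have a triangle, a vertex of degree at least three, or is it a linear forest?) confirms. Since $H'\ssi H$ implies that the class of $H'$-free graphs is contained in that of $H$-free graphs, hardness transfers upward, so it suffices to treat these seven graphs. For the connected obstructions $C_3,C_4,C_5,K_{1,3}$ the class is closed under disjoint union, and I would reduce from {\sc Colouring} (\NP-complete here by Theorem~\ref{t-dicho}): from an instance $(F,k)$, where one may assume the no-instances satisfy $\chi(F)=k+1$, I form $G=F+B$ with $B$ an $H$-free, $(k+1)$-chromatic, vertex-critical (resp.\ edge-critical) gadget---a clique $K_{k+1}$ for $C_4,C_5,K_{1,3}$, and a triangle-free $4$-chromatic critical graph such as the Gr\"otzsch graph for $C_3$. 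Then $\chi(G)=\max(\chi(F),\chi(B))$, and the gadget yields a critical vertex (edge) precisely when it attains the maximum, i.e.\ when $\chi(F)\le k$, giving \NP-hardness.

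The disconnected obstructions $4P_1$, $2P_1+P_2$, $2P_2$ are the crux. Here the class is no longer closed under disjoint union, but is closed under adding universal vertices, i.e.\ under joining with a clique; and joining with a clique makes every clique vertex and clique edge trivially critical, so the disjoint-union reduction collapses into an always-yes instance. I would instead design a gadget that introduces the \emph{redundancy} needed to make $G$ critical-free exactly when $F$ is hard to colour, reducing from the complement of {\sc Colouring} and obtaining co-\NP-hardness. Checking that each such gadget preserves $H$-freeness and creates no unintended critical vertex or edge is the main technical obstacle, and it is precisely this asymmetry between the connected and disconnected obstructions that forces the two-sided ``\NP-hard or co-\NP-hard'' conclusion. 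Finally, since {\sc Critical Edge} and {\sc Contraction-Critical Edge} coincide, every hardness result for edges carries over to the contraction problem at once.
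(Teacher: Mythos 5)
Your overall architecture matches the paper's: the equivalence of {\sc Critical Edge} and {\sc Contraction-Critical Edge} (your argument is exactly Proposition~\ref{p-equivalent}), the polynomial cases via Theorem~\ref{t-dicho} and closure properties, the split of the hard cases into ``$H$ contains a claw or cycle'' versus ``$H$ is a linear forest'', and the disjoint-union-plus-critical-gadget reduction for the first of these. But two steps have genuine gaps. First, your reduction $G=F+B$ assumes that no-instances of {\sc Colouring} satisfy $\chi(F)=k+1$. Theorem~\ref{t-dicho} gives no such promise, and without it the reduction fails: if, say, $F$ is a vertex-critical graph with $\chi(F)=k+3$, then $\chi(G)=k+3$ and deleting any vertex of $F$ lowers $\chi(G)$ by one, so $G$ is a yes-instance even though $\chi(F)>k$. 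The paper sidesteps any promise by taking \emph{two} disjoint copies of the input, $G'=2F+K_{k+1}$: deleting a vertex (or contracting an edge) inside one copy of $F$ cannot lower $\chi(G')$ because the other copy survives, so the only candidate critical vertices and edges lie in the gadget, and those are critical exactly when $\chi(F)\le k$. (A smaller point on the polynomial side: rather than appealing to cotrees or modular decompositions being ``controlled enough'' under identification, the clean statement is that the class of $H$-free graphs is closed under edge contraction whenever $H$ is a linear forest, so $G/e$ is again $H$-free and Theorem~\ref{t-dicho} applies directly to it.)

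Second, and more seriously, the co-\NP-hardness for the disconnected obstructions $4P_1$, $2P_1+P_2$, $2P_2$ is not proved; you correctly identify it as the main technical obstacle but supply no construction, and this is where essentially all of the work in the paper lies. The paper handles all three at once (Theorem~\ref{t-main}) by passing to the complement: it reduces {\sc Monotone 1-in-3-SAT} to the question of whether a $(C_4,C_5,K_4,\overline{2P_1+P_2})$-free graph has a vertex whose deletion (respectively, a non-edge whose addition) lowers the clique covering number, using a $C_7$ (for vertices) or $C_{11}$ (for edges) clause gadget glued to triangle variable gadgets. A fractional weighting argument shows $\sigma(G)\ge\tfrac{10}{3}n$ always, that $\Phi$ is 1-satisfiable if and only if $\sigma(G)=\tfrac{10}{3}n$, in which case every clique in every minimum clique cover has size $2$ or $3$ and no critical vertex or edge exists, whereas if $\Phi$ is not 1-satisfiable some minimum clique cover contains a singleton, which yields one. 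Nothing in your sketch substitutes for this construction, so as written the theorem is not established for the linear-forest case.
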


We note that the classification in Theorem~\ref{t-critical} coincides with the one in Theorem~\ref{t-dicho}. The polynomial-time cases for {\sc Critical Vertex} and {\sc Contraction-Critical Edge} can be obtained from Theorem~\ref{t-dicho}.  The reason for this is that a class of $H$-free graphs is not only closed under vertex deletions, but also under edge contractions whenever $H$ is a 
{\it linear forest}, that is, a disjoint union of a set of paths (see Section~\ref{s-main} for further details). 
However, no class of $H$-free graphs is closed under edge deletion. We get around this issue by proving, in Section~\ref{s-equi}, that an edge is critical if and only if it is contraction-critical. Hence, {\sc Critical Edge} and {\sc Contraction-Critical Edge} are equivalent.

The \NP-hardness constructions of Theorem~\ref{t-dicho} cannot be used for proving the hard cases for  {\sc Critical Vertex}, {\sc Critical Edge} and {\sc Contraction-Critical Edge}. 
Instead we construct new hardness reductions in Sections~\ref{s-clawcycle} and~\ref{s-linearforest}.
In Section~\ref{s-clawcycle} we prove that the three problems are \NP-hard for $H$-free graphs if  $H$ contains a claw or a cycle on three or more vertices. 
In the remaining case $H$ is a linear forest. In Section~\ref{s-linearforest} we prove that the three problems are co-\NP-hard even for  $(C_5,4P_1,2P_1+P_2, 2P_2)$-free graphs. In Section~\ref{s-main} we combine the known cases with our new results from Sections~\ref{s-equi}--\ref{s-linearforest} in order to prove Theorem~\ref{t-critical}.

\subsection{Consequences}

Our results have consequences for the computational complexity of two graph blocker problems. Let $S$ be some fixed set of graph operations, and let $\pi$ be some fixed graph parameter. Then, for a given graph $G$ and integer $k\geq 0$, the {\sc $S$-Blocker($\pi$)} problem asks if $G$ can be modified into a graph $G'$ by using at most $k$ operations from $S$ so that $\pi(G')\leq \pi(G)-d$ for some given {\it threshold} $d\geq 0$.    
Over the last few years, the {\sc $S$-Blocker($\pi$)} problem has been well studied, see for instance~\cite{BBPR,BTT11,Bentz,CWP11,DPPR15,PBP,PPR17b,PPR16,PPR17,RBPDCZ10}. 
If $S$ consists of a single operation that is either a vertex deletion or edge contraction, then $S$-{\sc Blocker($\pi$)} is called {\sc Vertex Deletion Blocker($\pi$)} or {\sc Contraction Blocker($\pi$)}, respectively. 
By taking $d=k=1$ and $\pi=\chi$ we obtain the problems {\sc Critical Vertex} and {\sc Contraction-Critical Edge}, respectively. 
We showed in~\cite{PPR17b} how the results for {\sc Critical Vertex} and {\sc Contraction-Critical Edge} can be extended with other results to get complexity dichotomies for {\sc Vertex Deletion Blocker($\chi$)} and {\sc Contraction Blocker($\chi$)} for $H$-free graphs.  

\subsection{Future Work}

A graph $G$ is {\it $(H_1,\ldots,H_p)$-free} for some family of graphs $\{H_1,\ldots,H_p\}$ and integer $p\geq 2$ if $G$ is $H$-free for every $H\in \{H_1,\ldots,H_p\}$.
As a direction for future research we propose classifying the computational complexity of our three problems for
$(H_1,\ldots,H_p)$-free graphs for any $p\geq 2$.
We note that such a classification for {\sc Coloring} is still wide open even for $p=2$ (see~\cite{GJPS17}).
Hence, research in this direction might lead to an increased understanding of the complexity of the {\sc Coloring} problem.

\section{Equivalence}\label{s-equi}

We prove the following result, which implies that the problems {\sc Critical Edge} and {\sc Contraction-Critical Edge} are equivalent.

\begin{proposition}\label{p-equivalent}
An edge is critical if and only if it is contraction-critical.
\end{proposition}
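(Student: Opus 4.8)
The plan is to exploit a simple dictionary between the proper colourings of the three graphs $G$, $G-e$ and $G/e$ attached to a fixed edge $e=vw$. First I would record that deleting $e$ drops only the single constraint $c(v)\neq c(w)$, so the proper colourings of $G$ are exactly the proper colourings of $G-e$ in which $v$ and $w$ get different colours. Dually, I would observe that a proper colouring of $G/e$ corresponds precisely to a proper colouring of $G-e$ in which $c(v)=c(w)$, by identifying the contracted vertex with the common colour of $v$ and $w$, and conversely. Since any proper colouring of $G-e$ either colours $v$ and $w$ alike or does not, this splits the colourings of $G-e$ into those that come from $G$ and those that come from $G/e$.

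For the forward direction I would assume $e$ is critical, i.e. $\chi(G-e)=\chi(G)-1$, and fix an optimal colouring $c$ of $G-e$ using $\chi(G)-1$ colours. The crucial step is to show $c(v)=c(w)$: were $c(v)\neq c(w)$, then by the dictionary $c$ would be a proper colouring of $G$ with only $\chi(G)-1$ colours, a contradiction. Hence $c(v)=c(w)$, so $c$ descends to a proper colouring of $G/e$ with $\chi(G)-1$ colours, giving $\chi(G/e)\le\chi(G)-1$. I would then invoke the fact, noted in the introduction, that contraction reduces the chromatic number by at most $1$, so $\chi(G/e)\ge\chi(G)-1$; together these force $\chi(G/e)=\chi(G)-1$, i.e. $e$ is contraction-critical.

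For the converse I would start from $\chi(G/e)=\chi(G)-1$ and push an optimal colouring of $G/e$ back through the dictionary to obtain a proper colouring of $G-e$ with $c(v)=c(w)$ using $\chi(G)-1$ colours, so $\chi(G-e)\le\chi(G)-1$. Since deletion likewise lowers the chromatic number by at most $1$, this yields $\chi(G-e)=\chi(G)-1$, so $e$ is critical.

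The only genuine obstacle is the colour-equality step in the forward direction; everything else is bookkeeping once the two correspondences are set up. I therefore expect the cleanest write-up to state those correspondences explicitly at the outset, after which each implication reduces to a short argument that upgrades an inequality ``$\le\chi(G)-1$'' to equality using the relevant ``at most $1$'' bound.
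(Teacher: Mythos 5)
Your proposal is correct and follows essentially the same route as the paper: both directions rest on the correspondence between colourings of $G-e$ with $c(v)=c(w)$ and colourings of $G/e$, with the key observation in the forward direction that an optimal colouring of $G-e$ must colour $v$ and $w$ alike. Your explicit appeal to the ``reduces by at most 1'' bounds to upgrade the inequalities to equalities is a slightly more careful bookkeeping of a step the paper leaves implicit, but the argument is the same.
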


\begin{proof}
Let $e=uv$ be an edge in a graph $G$. First suppose that $e$ is critical, so $\chi(G-e)=\chi(G)-1$. Then $u$ and $v$ are colored alike in any coloring of $G-e$
that uses $\chi(G-e)$ colors. Hence, the graph $G/e$ obtained from contracting $e$ in $G$ can also be colored with $\chi(G-e)$ colors. Indeed, we simply copy a $(\chi(G-e))$-coloring of $G-e$ such that the new vertex in $G/e$ is colored with the same color as $u$ and $v$ in $G-e$. 
Hence $\chi(G/e)=\chi(G-e)=\chi(G)-1$, which means that $e$ is contraction-critical.

Now suppose that $e$ is contraction-critical, so $\chi(G/e)=\chi(G)-1$. By copying a $\chi(G/e)$-coloring of $G/e$ such that $u$ and $v$ are colored with the same color as the new vertex in $G/e$, we obtain a coloring of $G-e$. So we can color $G-e$ with $\chi(G/e)$ colors as well.
Hence $\chi(G-e)=\chi(G/e)=\chi(G)-1$, which means that $e$ is critical.\qed
\end{proof}

\section{Forbidding Claws or Cycles}\label{s-clawcycle}

The \emph{claw} is the 4-vertex star $K_{1,3}$ on vertices $a,b,c,d$ and edges $ab$, $ac$ and $ad$.
In this section we prove that the problems {\sc Critical Vertex}, {\sc Critical Edge} and {\sc Contraction-Critical Edge} are \NP-hard for $H$-free graphs whenever the graph~$H$ contains 
a claw or a cycle on at least three vertices.

Let ${\cal G}$ be a graph class with the following property: if $G\in {\cal G}$, then so are $2G$ and $G+K_r$ for any $r\geq 1$. We call such a graph class {\it clique-proof}.

\begin{theorem}\label{t-col}
If {\sc Coloring}  is \NP-complete for a clique-proof graph class~${\cal G}$, then both {\sc Critical Vertex} and
{\sc Contraction-Critical Edge} are \NP-hard for ${\cal G}$.
\end{theorem}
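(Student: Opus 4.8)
The plan is to give a polynomial-time reduction from {\sc Coloring} restricted to ${\cal G}$, which is \NP-complete by hypothesis, to each of the two target problems. Since {\sc Critical Edge} and {\sc Contraction-Critical Edge} coincide by Proposition~\ref{p-equivalent}, it suffices to build a single gadget that simultaneously controls both the existence of a critical vertex and the existence of a critical edge.

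Given an instance $(G,k)$ of {\sc Coloring} with $G\in{\cal G}$, I would construct the graph $G^\star = 2G + K_{k+1}$. The clique-proof property of ${\cal G}$ immediately yields $2G\in{\cal G}$ and then $G^\star=(2G)+K_{k+1}\in{\cal G}$, so the reduction never leaves the class. Throughout I would use the elementary identity $\chi(F_1+F_2)=\max\{\chi(F_1),\chi(F_2)\}$ for disjoint unions, which in particular gives $\chi(G^\star)=\max\{\chi(G),k+1\}$.

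The heart of the argument is to show that $G^\star$ has a critical vertex (respectively, a critical edge) if and only if $\chi(G)\le k$. First I would argue that no vertex and no edge lying inside $2G$ can be critical: deleting such a vertex or edge affects only one of the two disjoint copies of $G$, while the untouched copy still forces chromatic number $\chi(G)$, so $\chi(G^\star)$ does not drop. This is exactly the role of the doubling operation $2G$ in the definition of clique-proof, namely to neutralise any criticality internal to $G$. It then remains to examine the clique. Deleting a vertex of $K_{k+1}$ yields $2G+K_k$, and deleting an edge of $K_{k+1}$ yields $2G+(K_{k+1}-e)$ with $\chi(K_{k+1}-e)=k$; in both cases the resulting chromatic number equals $\max\{\chi(G),k\}$. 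Comparing this with $\chi(G^\star)=\max\{\chi(G),k+1\}$, the value drops by exactly one precisely when $\chi(G)\le k$, and does not drop when $\chi(G)\ge k+1$.

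Combining the two observations, $G^\star$ contains a critical vertex iff $\chi(G)\le k$, and likewise a critical edge iff $\chi(G)\le k$; by Proposition~\ref{p-equivalent} the latter is the same as possessing a contraction-critical edge. Since $G^\star$ is built in polynomial time and lies in ${\cal G}$, this establishes \NP-hardness of both {\sc Critical Vertex} and {\sc Contraction-Critical Edge} on ${\cal G}$. I expect the only delicate point to be the bookkeeping that the clique must be strictly larger than $\chi(G)$ to supply a critical element, which is precisely why the clique size is chosen to be $k+1$; verifying $\chi(K_{k+1}-e)=k$ and carrying out the two case distinctions is routine.
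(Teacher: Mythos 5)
Your proposal is correct and follows essentially the same route as the paper: the construction $2G+K_{k+1}$, the identity $\chi(2G+K_{k+1})=\max\{\chi(G),k+1\}$, and the observation that the duplicated copy of $G$ neutralises any criticality inside $G$ are exactly the paper's argument. The only cosmetic difference is that you handle the clique edge by deletion ($\chi(K_{k+1}-e)=k$) and then invoke Proposition~\ref{p-equivalent}, whereas the paper contracts an edge of $K_{k+1}$ directly; both are equally valid.
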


\begin{proof}
Let ${\cal G}$ be a graph class that is clique-proof. From a given graph $G\in {\cal G}$ and integer $\ell\geq 1$ we construct the graph $G'=2G+K_{\ell+1}$. Note that $G'\in {\cal G}$ by definition and that $\chi(G')=\max\{\chi(G),\ell+1\}$. 
We first prove that $\chi(G)\leq \ell$ if and only if $G'$ contains a contraction-critical edge. 

Suppose that  $\chi(G)\leq \ell$. Then $\chi(G')=\chi(K_{\ell+1})=\ell+1$. In $G'$ we contract an edge of the $K_{\ell+1}$. This yields the graph $G^*=2G+K_\ell$, which has
chromatic number $\chi(G^*)=\ell$, as $\chi(K_\ell)=\ell$ and $\chi(G)\leq \ell$. As $\chi(G')=\ell+1$, this means that  $\chi(G^*)= \chi(G')-1$. Hence $G'$ contains a contraction-critical edge.

Now suppose that $G'$ contains a contraction-critical edge. Let $G^*$ be the resulting graph after contracting this edge.
Then $\chi(G^*)=\chi(G')-1$. As contracting an edge in one of the two copies of $G$ in $G'$ does not lower the chromatic number of $G'$, the contracted edge must be in the $K_{\ell+1}$, that is, $G^*=2G+ K_\ell$.
As this did result in a lower chromatic number, we conclude that $\chi(G')=\chi(K_{\ell+1})=\ell+1$ and $\chi(G^*)=\chi(2G+ K_\ell)=
\max\{\chi(G),\ell\}=\ell$.
The latter equality implies that $\chi(G)\leq \ell$.

From the above we conclude that {\sc Contraction-Critical Edge} is \NP-hard.
We can prove that {\sc Critical Vertex} is \NP-hard by using the same arguments.\qed
\end{proof}

We also need a result of Maffray and Preissmann as a lemma.

\begin{lemma}[\cite{MP96}]\label{l-3col}
The {\sc $3$-Coloring} problem  is \NP-complete for $C_3$-free graphs.
\end{lemma}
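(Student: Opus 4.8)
The plan is to establish membership in \NP\ (trivial: a $3$-colouring is a polynomially verifiable certificate) and then prove \NP-hardness by a reduction from the $3$-colouring problem on general graphs, which is \NP-complete by Lov\'asz~\cite{Lo73}. Given an arbitrary graph $G$, I would build a $C_3$-free graph $G'$ that is $3$-colourable if and only if $G$ is. The obstruction to reusing $G$ directly is precisely its triangles, so the idea is to replace every edge of $G$ by a triangle-free ``inequality gadget'' that simulates the adjacency constraint.

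Concretely, I would fix a gadget $T$ with two distinguished non-adjacent terminals $s,t$ having the following three properties: (i) $T$ is $C_3$-free; (ii) every $3$-colouring of $T$ satisfies $c(s)\neq c(t)$; and (iii) $T$ admits at least one $3$-colouring. To form $G'$, I replace each edge $xy\in E(G)$ by a private copy of $T$ in which the terminals $s,t$ are identified with $x$ and $y$; the vertices $x\in V(G)$ are shared across the gadgets of their incident edges, while all internal gadget vertices are private.

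For correctness I would argue both directions. If $G$ has a $3$-colouring $c$, then $c(x)\neq c(y)$ for every edge $xy$; property~(iii) together with the symmetry of the three colours (any colouring of $T$ may be post-composed with a colour permutation) shows that every prescription $c(s)\neq c(t)$ extends to a $3$-colouring of the corresponding gadget, and since distinct gadgets meet only in terminals these extensions combine into a $3$-colouring of $G'$. Conversely, any $3$-colouring of $G'$ restricted to $V(G)$ is proper by property~(ii), so $G$ is $3$-colourable. It remains to check that $G'$ is $C_3$-free: a triangle cannot lie inside a single gadget by~(i), no two terminals are adjacent (edges of $G'$ live strictly inside gadgets), and two neighbours of a shared terminal either lie in the same gadget (yielding a triangle inside one gadget, impossible) or in different gadgets (hence non-adjacent). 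Since $T$ has constant size, $G'$ is built in polynomial time, completing the reduction.

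The main obstacle is the existence of the gadget~$T$, and in particular reconciling properties~(i) and~(ii): forcing $c(s)\neq c(t)$ without an edge $st$ and without any triangle requires a genuinely $4$-chromatic obstruction. Here I would exploit the Gr\"otzsch graph (the Mycielskian of $C_5$), which is $C_3$-free and $4$-chromatic. Splitting its hub vertex into two non-adjacent terminals $s,t$ and distributing the hub's edges between them yields a graph in which the inner vertices still form an independent set, so $T$ stays $C_3$-free, giving~(i). Identifying $s$ with $t$ recovers the Gr\"otzsch graph, which is not $3$-colourable; hence no $3$-colouring of $T$ can assign $s$ and $t$ the same colour, giving~(ii). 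Finally, exhibiting a single explicit $3$-colouring of $T$ establishes~(iii). Verifying these three facts for the concrete split graph is the only place where a short case check is needed.
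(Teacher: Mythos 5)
This lemma is imported verbatim from Maffray and Preissmann~\cite{MP96}; the paper gives no proof of its own, only the citation, so there is nothing internal to compare you against. Your blind reconstruction is nevertheless a correct, self-contained proof, and it is essentially the classical route (and close in spirit to the original argument): replace each edge of an arbitrary instance of $3$-{\sc Coloring} by a private triangle-free ``not-equal'' gadget whose two non-adjacent terminals can never receive the same colour in a $3$-colouring, the gadget being obtained from the Gr\"otzsch graph by splitting the hub. All the key verifications are sound: splitting the hub preserves triangle-freeness because the hub's neighbourhood is an independent set; identifying $s$ with $t$ recovers the $4$-chromatic Gr\"otzsch graph, which forces $c(s)\neq c(t)$; colour permutations let any prescribed unequal pair on the terminals extend into the gadget; and your case analysis for why $G'$ has no triangle (no two terminals are adjacent, internal vertices of distinct gadgets are non-adjacent) is complete. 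The one point you correctly flag but should not gloss over is property~(iii): it does depend on how the hub's five edges are distributed. The degenerate split in which one terminal keeps all five edges fails, since the gadget then contains the Gr\"otzsch graph as a subgraph and is not $3$-colourable; a nontrivial split (e.g.\ two edges to $s$ and three to $t$) does admit an explicit $3$-colouring, so the short case check you promise genuinely needs to be carried out for a concrete split, after which the proof is complete.
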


\begin{figure}
\begin{center}
\includegraphics[scale=1]{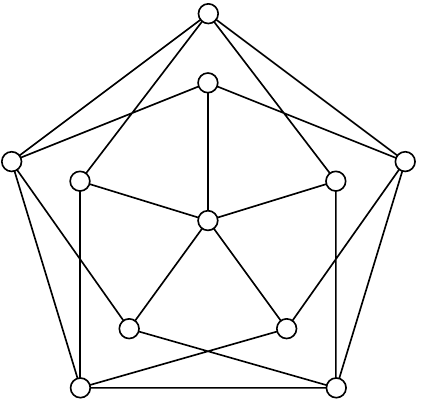}
\caption{The Gr\"otzsch graph.}
\label{f-gr}
\end{center}
\end{figure}

We are now ready to prove the main result of this section.

\begin{theorem}\label{t-known}
Let $H$ be a graph such that  $H\si K_{1,3}$ or $H\si C_r$ for some $r\geq 3$. Then the problems {\sc Critical Vertex}, {\sc Critical Edge} and {\sc Contraction-Critical Edge} are \NP-hard for $H$-free graphs.
\end{theorem}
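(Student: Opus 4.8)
The plan is to reduce the two hypotheses ($H\si K_{1,3}$ and $H\si C_r$) to hardness on the two concrete classes of claw-free graphs and of $C_r$-free graphs, and then to invoke Theorem~\ref{t-col} wherever the relevant class is clique-proof. The bridge between classes is induced-subgraph monotonicity: if $F\ssi H$, then every $F$-free graph is $H$-free, so any hardness reduction that outputs $F$-free graphs also certifies hardness for $H$-free graphs. Hence it suffices to prove hardness for $K_{1,3}$-free graphs (when $H\si K_{1,3}$) and for $C_r$-free graphs for each fixed $r\geq 3$ (when $H\si C_r$). Moreover, by Proposition~\ref{p-equivalent} an edge is critical if and only if it is contraction-critical, so {\sc Critical Edge} and {\sc Contraction-Critical Edge} coincide and it is enough to treat {\sc Critical Vertex} and {\sc Contraction-Critical Edge}.

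For the claw case I would first observe that claw-free graphs are clique-proof: a claw is connected, so any induced $K_{1,3}$ in $2G$ or in $G+K_s$ must lie inside a single copy of $G$ or inside $K_s$, and $K_s$ is claw-free; hence $2G$ and $G+K_s$ are claw-free whenever $G$ is. Since {\sc Coloring} restricted to line graphs amounts to edge-colouring and is therefore \NP-complete, and since line graphs are claw-free, {\sc Coloring} is \NP-complete on the clique-proof class of claw-free graphs. Theorem~\ref{t-col} then immediately yields that {\sc Critical Vertex} and {\sc Contraction-Critical Edge} are \NP-hard for claw-free graphs.

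For a cycle $C_r$ with $r\geq 4$ I would follow the same route. The key point is that the only induced cycles of a complete graph are triangles, so $K_s$ is $C_r$-free for every $r\geq 4$; combined with the fact that $C_r$ is connected, this shows that $C_r$-free graphs are clique-proof. To apply Theorem~\ref{t-col} it then remains to locate an \NP-complete {\sc Coloring} instance inside this class, which I would obtain from the \NP-completeness of ({\sc $3$-}){\sc Coloring} for graphs of girth greater than $r$, since such graphs are in particular $C_r$-free. Theorem~\ref{t-col} again gives \NP-hardness of {\sc Critical Vertex} and {\sc Contraction-Critical Edge} for $C_r$-free graphs.

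The remaining and most delicate case is $r=3$, and I expect it to be the main obstacle: the clique anchor fails precisely because $K_s$ contains a triangle for $s\geq 3$, so $C_3$-free graphs are \emph{not} clique-proof and Theorem~\ref{t-col} is unavailable. Instead I would give a direct reduction from {\sc $3$-Coloring} on $C_3$-free graphs, which is \NP-complete by Lemma~\ref{l-3col}, replacing the clique anchor $K_{\ell+1}$ by a triangle-free, $4$-chromatic, critical gadget, namely the Gr\"otzsch graph $F$ of Figure~\ref{f-gr}: it is $C_3$-free, satisfies $\chi(F)=4$, and is both vertex-critical and edge-critical. Given a $C_3$-free graph $G$ I set $G'=2G+F$, which is again $C_3$-free and has $\chi(G')=\max\{\chi(G),4\}$. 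If $\chi(G)\leq 3$ then $\chi(G')=4$, and deleting a critical vertex (respectively contracting a critical edge) of $F$ lowers its chromatic number to $3$ while $\max\{\chi(G),3\}=3$, producing a critical vertex (respectively a contraction-critical edge) of $G'$. Conversely, if $\chi(G)\geq 4$ then $\chi(G')=\chi(G)$: deleting or contracting inside one copy of $G$ leaves the other copy intact and cannot lower the chromatic number, while deleting a vertex or contracting an edge of $F$ keeps it at $\max\{\chi(G),\chi(F)\}=\chi(G)$ since $\chi(G)\geq 4=\chi(F)$; hence $G'$ has no critical vertex and no contraction-critical edge. Thus $G'$ has a critical element if and only if $\chi(G)\leq 3$. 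The crucial device here is the doubling $2G$, whose sole purpose is to neutralise any critical vertex or edge hidden inside $G$ when $\chi(G)\geq 4$; a secondary point to get right is the sourcing of hard {\sc Coloring} instances for $r\geq 4$ via large girth, since triangle-freeness alone does not guarantee $C_r$-freeness.
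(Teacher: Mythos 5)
Your proof is correct and shares the paper's two key devices: Theorem~\ref{t-col} applied to a clique-proof class for the bulk of the cases, and the Gr\"otzsch graph playing the role of $K_{\ell+1}$ in the construction $G'=2G+F$ for the triangle case. Where you differ is in the case organisation and in the sourcing of the {\sc Coloring} hardness. The paper splits on whether $H$ is a clique and, when it is not, asserts that the class of $H$-free graphs itself is clique-proof, then applies Theorems~\ref{t-dicho} and~\ref{t-col} to that class; you instead pass by induced-subgraph monotonicity to the smaller classes of $K_{1,3}$-free and $C_r$-free graphs and verify clique-proofness there. Your version is actually the safer one: for disconnected $H$ (say $H=2K_{1,3}$, which is not a clique and contains a claw), the class of $H$-free graphs is \emph{not} closed under disjoint union, so the paper's blanket claim really needs exactly the connectedness argument you spell out for $K_{1,3}$ and $C_r$. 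On the other hand, your appeals to edge-colouring of line graphs and to $3$-{\sc Coloring} on graphs of large girth are external facts the paper never invokes; Theorem~\ref{t-dicho} already yields \NP-completeness of {\sc Coloring} for $K_{1,3}$-free graphs and for $C_r$-free graphs with $r\geq 4$ directly, so those citations can be dropped. Your treatment of the $C_3$ case coincides with the paper's, but you make explicit what the paper leaves as ``similar arguments'': that the Gr\"otzsch graph, being $4$-critical, has a critical vertex and (via Proposition~\ref{p-equivalent}) a contraction-critical edge, and that the doubling $2G$ neutralises any critical element inside $G$ when $\chi(G)\geq 4$.
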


\begin{proof}
By Proposition~\ref{p-equivalent} it suffices to consider {\sc Critical Vertex} and {\sc Contraction-Critical Edge}.
If $H$ is not a clique, then the class of $H$-free graphs is clique-proof. Hence, in this case, we can use Theorems \ref{t-dicho} and ~\ref{t-col} to obtain \NP-hardness. 

Suppose $H$ is a clique. 
It suffices to 
show \NP-completeness for $H=C_3$. We reduce from 3-{\sc Coloring} restricted to $C_3$-free graphs. This problem is \NP-complete by Lemma~\ref{l-3col}. 
Let $G$ be a $C_3$-free graph that is an instance of {\sc $3$-Coloring}. We obtain an instance of {\sc Critical Vertex} or
{\sc Contraction-Critical Edge} as follows. Take the disjoint union of two copies of $G$ and the
Gr\"otzsch graph~$F$ (see Figure~\ref{f-gr}), which is known to be 4-colorable but not 3-colorable (see~\cite{West}). Call the resulting graph~$G'$, so $G'=2G+F$. As $G$ and $F$ are $C_3$-free, $G'$ is $C_3$-free. We claim that $G$ is 3-colorable if and only if $G'$ has a critical vertex if and only if $G'$ has a contraction-critical edge. This can be proven via similar arguments as used in the proof of Theorem~\ref{t-col}, with $F$ playing the role of $K_{\ell+1}$.
\qed
\end{proof}

Note that in Theorem~\ref{t-known} we cannot prove membership in \NP, as {\sc Coloring} is \NP-complete for the class of $H$-free graphs if $H\si K_{1,3}$ or $H\si C_r$ for some $r\geq 3$ due to Theorem~\ref{t-dicho}. As such, it is not clear if there exists a certificate.

\section{Forbidding Linear Forests}\label{s-linearforest}

In this section we prove our second hardness result needed to show Theorem~\ref{t-critical}. We first introduce some additional terminology.

Let $G$ be a graph.
The graph $\overline{G}$ denotes the {\it complement} of $G$, that is, the graph with vertex set $V(G)$ and an edge between two vertices $u$ and $v$ if and only if $u$ and~$v$ are not adjacent in $G$. 
A subset $K$ of vertices in $G$ is a {\it clique} if any two vertices in~$K$ are adjacent to each other.
A {\it clique cover} of a graph~$G$ is a set ${\cal K}$ of cliques in~$G$, such that each vertex of~$G$ belongs to exactly one clique of ${\cal K}$.
The {\it clique covering number}~$\sigma(G)$ is the size of a smallest clique cover of $G$. Note that $\chi(G)=\sigma(\overline{G})$. The size of a largest clique in a graph $G$ is denoted by $\omega(G)$.

The hardness construction in the proof of our next result uses clique covers. Kr\'al et al.~\cite{KKTW01} proved
that {\sc Coloring} is \NP-hard for  $(C_5,4P_1,P_1+2P_2, 2P_2)$-free graphs.  
This does not give us hardness for {\sc Critical Vertex} or {\sc Critical Edge}, but we can use some elements of their construction.
For instance, we reduce from a similar \NP-complete problem as they do, namely the \NP-complete problem {\sc Monotone 1-in-3-SAT}, which is defined as follows.
Let $\Phi$ be a formula with clause set~$C$ of size~$m$ and variable set~$X$ of size $n$, so that each clause in $C$ consists of three distinct positive literals, and each variable in $X$ occurs in exactly three clauses. The question is whether $\Phi$ has a truth assignment, such that each clause is satisfied by exactly one variable. In that case we say that $\Phi$ is {\it 1-satisfiable}.
Note that $m=n$. 
Moore and Robson proved that this problem is \NP-complete.

\begin{lemma}[\cite{MR01}]\label{l-np}
{\sc Monotone 1-in-3-SAT} is \NP-complete.
\end{lemma}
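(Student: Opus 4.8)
The plan is first to dispatch membership in \NP, which is immediate: a truth assignment to $X$ is a certificate, and one checks in polynomial time that every clause contains exactly one true literal, thereby verifying $1$-satisfiability. All the effort is thus in the \NP-hardness.

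For hardness I would reduce from the unrestricted \emph{Positive $1$-in-$3$-SAT} problem, in which every clause consists of positive literals but no regularity on variable occurrences is assumed; this problem is \NP-complete by Schaefer's dichotomy theorem. It is convenient to read a positive $1$-in-$3$ clause on variables $x,y,z$ as the linear constraint $x+y+z=1$ over $\{0,1\}$. The goal of the reduction is then to transform an arbitrary such system into an equivalent one in which every clause has three distinct variables and every variable occurs in exactly three clauses (so that $m=n$), while preserving $1$-satisfiability in both directions.

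The first step handles a clause with a repeated variable, say $(x,x,y)$: since $x$ true would make two literals true, such a clause forces $x$ to be false and $y$ true, so it can be propagated away in polynomial time, after which every clause has three distinct variables. The second and main step regularises the variable degrees. The key device is an \emph{equality gadget}: to force two variables $a,b$ to take the same value, introduce two fresh variables $p,q$ together with the clauses $(a,p,q)$ and $(b,p,q)$; subtracting the constraints $a+p+q=1$ and $b+p+q=1$ yields $a=b$ in every $1$-satisfying assignment. Replacing each over-represented variable by one fresh copy per occurrence and linking the copies in a chain of such gadgets splits every high-degree variable into copies of small, bounded degree, all while keeping the formula monotone.

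The hard part will be to drive every occurrence count to \emph{exactly} three rather than merely bounding it, because a $1$-in-$3$ clause can never be vacuous: every clause added to raise a variable's degree imposes a genuine constraint, and the auxiliary variables $p,q$ of each gadget, as well as the endpoints of each equality chain, are themselves left under-represented. Overcoming this requires a careful degree-balancing argument in which the padding clauses and their fresh variables are organised so that all copies, all chain endpoints, and all auxiliary variables simultaneously reach degree exactly three, without any padding imposing a spurious constraint on the value being copied. Once the construction is verified to preserve $1$-satisfiability in both directions and to run in polynomial time, \NP-hardness follows, and together with membership this gives \NP-completeness.
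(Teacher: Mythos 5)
The paper does not actually prove this lemma: it is imported wholesale from Moore and Robson~\cite{MR01}, so there is no internal proof to compare your attempt against. Judged on its own terms, your proposal is a reasonable plan whose hardest step is missing. The \NP-membership argument, the elimination of clauses with repeated variables, and the equality gadget (clauses $(a,p,q)$ and $(b,p,q)$ forcing $a=b$ via $a+p+q=1=b+p+q$) are all correct, and linking the copies of a high-degree variable in a \emph{cycle} of such gadgets does give each copy degree exactly three (one original occurrence plus two gadget occurrences).

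The gap is precisely the one you flag and then wave at. The entire content of the lemma, as the paper uses it, is the regularity condition that every variable occurs in exactly three clauses (whence $m=n$, which the reduction in Theorem~\ref{t-main} relies on). Your construction does not deliver this: each auxiliary pair $p_i,q_i$ ends with degree exactly two, original variables of degree one or two cannot be raised to degree three by the cycle construction, and you cannot pad with vacuous clauses because every 1-in-3 clause is a genuine constraint $x+y+z=1$ that risks forcing the padded variable's value. The sentence promising ``a careful degree-balancing argument'' in which everything ``simultaneously reaches degree exactly three'' is a restatement of the difficulty, not a proof step; an explicit gadget that absorbs the leftover occurrences while remaining satisfiable independently of the values being copied is exactly what a proof of this lemma has to supply. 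As it stands the reduction is not established, whereas the paper discharges the statement by citation.
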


We are now ready to prove the main result of this section.

\begin{theorem}\label{t-main}
The problems {\sc Critical Vertex}, {\sc Critical Edge} and {\sc Contraction-Critical Edge} are co-\NP-hard for $(C_5,4P_1,2P_1+P_2, 2P_2)$-free graphs.
\end{theorem}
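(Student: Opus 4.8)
The plan is to reduce from \textsc{Monotone 1-in-3-SAT}, which is \NP-complete by Lemma~\ref{l-np}. From a formula $\Phi$ with variable set $X$ and clause set $C$, where $|X|=|C|=n$, each variable occurs in exactly three clauses, and each clause contains three (positive) variables, I would build in polynomial time a single graph $G=G_\Phi$ that is $(C_5,4P_1,2P_1+P_2,2P_2)$-free and satisfies
\[
\Phi \text{ is } 1\text{-satisfiable} \iff G \text{ has no critical vertex} \iff G \text{ has no critical edge}.
\]
Since this reduces the complement of an \NP-complete problem to \textsc{Critical Vertex} and to \textsc{Critical Edge}, both are co-\NP-hard, and then \textsc{Contraction-Critical Edge} is co-\NP-hard as well by Proposition~\ref{p-equivalent}. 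Note that the disjoint-union trick ``$2G+F$'' from Theorems~\ref{t-col} and~\ref{t-known} is unavailable here: the class is not closed under disjoint union (two independent vertices from different copies create a $4P_1$, two edges a $2P_2$), so a single dense graph is forced and no obvious \NP-certificate survives --- consistent with a co-\NP-hardness rather than an \NP-hardness statement.

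I would describe the construction through the complement, using $\chi(G)=\sigma(\overline{G})$. Complementing the forbidden subgraphs turns ``$G$ is $(C_5,4P_1,2P_1+P_2,2P_2)$-free'' into ``$\overline{G}$ is $(C_5,K_4,C_4,K_4-e)$-free'', since $\overline{4P_1}=K_4$, $\overline{2P_2}=C_4$, $\overline{2P_1+P_2}=K_4-e$ (the diamond), and $\overline{C_5}=C_5$. A graph is $\{C_4,K_4-e\}$-free exactly when no two non-adjacent vertices have two common neighbours, and with $K_4$-freeness every clique of $\overline{G}$ is a single vertex, an edge, or a triangle. Hence a clique cover of $\overline{G}$ partitions its vertices into singletons, edges and triangles, and $\sigma(\overline{G})$ is governed by how many triangles (saving two each) and edges (saving one each) can be packed. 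This ``locally linear, triangle-packing'' structure is what the reduction would exploit.

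The link between $\Phi$ and triangle packings is that \textsc{Monotone 1-in-3-SAT} is an exact-cover condition: in a $1$-satisfying assignment the true variables partition $C$, each true variable being the unique satisfier of its three clauses, so there are exactly $n/3$ true variables (we may assume $3\mid n$, else $\Phi$ is trivially not $1$-satisfiable). I would therefore represent each clause by a vertex (or small gadget) of $\overline{G}$ and, for each variable, place a triangle on the three clause-vertices it occurs in, so that ``$x_i$ true'' corresponds to choosing that triangle; a partition of the clause-vertices into $n/3$ such triangles then exists if and only if $\Phi$ is $1$-satisfiable. Around this core I would attach a padding clique and auxiliary vertices to fix the minimum clique-cover number to a target value $k$, and calibrate it so that in the satisfiable case every minimum cover is tight and no single vertex or edge deletion lowers $\sigma$ below $k$, whereas in the unsatisfiable case the unavoidable defect of any cover localises to one designated vertex (and an incident edge) whose removal yields a cover of size $k-1$; that is, a critical vertex and a critical edge appear precisely when $\Phi$ is \emph{not} $1$-satisfiable.

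The main obstacle lies in the construction, not in the logic, and is the simultaneous enforcement of all four forbidden subgraphs. The raw incidence graph need not be $(C_5,K_4,C_4,K_4-e)$-free: two clauses may share more than one variable, overlapping variable-triangles can create diamonds or $C_4$'s, and induced $C_5$'s can arise. The gadgets must therefore be padded or subdivided so that no two non-adjacent vertices acquire two common neighbours and no $K_4$ or $C_5$ is created, all while preserving the exact-cover/triangle-packing correspondence; this case-by-case forbidden-subgraph verification is the most delicate part. The secondary difficulty is tuning the padding so that criticality toggles on a \emph{single} vertex and edge exactly with satisfiability, which requires pinning down $\sigma(\overline{G})$ in both cases and proving robustness on the satisfiable side. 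Once the vertex version is established, the edge and contraction versions follow from the same graph via Proposition~\ref{p-equivalent}.
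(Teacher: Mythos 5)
Your overall strategy coincides with the paper's: reduce from {\sc Monotone 1-in-3-SAT}, pass to the complement so that the target becomes lowering the clique covering number of a $(C_4,C_5,K_4,\overline{2P_1+P_2})$-free graph, encode each variable as a triangle on its three clause occurrences, and calibrate so that a critical vertex appears exactly when $\Phi$ is not 1-satisfiable. But the proposal stops exactly where the proof has to begin: the clause gadget is never specified, and you yourself flag the simultaneous enforcement of the four forbidden subgraphs and the pinning down of $\sigma$ as the open difficulties. In the paper these are resolved by one concrete choice: each clause $c=(x,y,z)$ becomes an induced $7$-cycle $c(x)a_1^ca_2^cc(y)a_3^cc(z)a_4^cc(x)$ with the three variable-occurrence vertices pairwise non-adjacent, and the variable triangle $Q_x$ lives on $c(x),c'(x),c''(x)$. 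This single structure makes $G$ simultaneously $C_4$-, $C_5$-, $K_4$- and diamond-free with $\omega(G)=3$, forces $\sigma(G)\ge\frac{10}{3}n$ with equality if and only if $\Phi$ is 1-satisfiable, and---via a weighting argument assigning $1/|K|$ to each vertex of a cover clique $K$ and showing each $C_7$ contributes at least $\frac{10}{3}$---shows that a cover of size exactly $\frac{10}{3}n$ uses only cliques of size $2$ or $3$ (hence no critical vertex), whereas if $\sigma(G)>\frac{10}{3}n$ some minimum cover contains a singleton (hence a critical vertex). None of this is automatic from ``a triangle-packing core plus padding'': a generic padding clique or auxiliary vertices attached to the clause vertices would typically create a $K_4$, a diamond or a $C_4$ at once, so the odd-cycle gadget is doing real work that the proposal leaves entirely open.

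Second, your closing sentence is wrong as stated: Proposition~\ref{p-equivalent} identifies {\sc Critical Edge} with {\sc Contraction-Critical Edge}; it does not derive either of them from {\sc Critical Vertex}, and the edge case does not ``follow from the same graph.'' The paper must rerun the construction with $C_{11}$ clause gadgets so that, in the unsatisfiable case, the leftover singleton $u$ of a minimum cover is adjacent to exactly one endpoint of some $2$-vertex cover clique $\{v,w\}$; then adding the edge $uw$ to $G$ (i.e., deleting an edge of $\overline{G}$) merges two cover cliques and drops $\sigma$ by one, while in the satisfiable case diamond-freeness guarantees that no single added edge can merge two cliques of a minimum cover. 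That separate argument, with its modified gadget, is a required component of the theorem that your proposal omits.
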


\begin{proof}
By Proposition~\ref{p-equivalent} it suffices to consider {\sc Critical Vertex} and {\sc Critical Edge}.
We will first consider {\sc Critical Vertex} and show that the equivalent problem whether a graph has a vertex whose deletion reduces
the clique covering number by~1 is co-\NP-hard for $(C_4,C_5,K_4,\overline{2P_1+P_2})$-free graphs. 
We call such a vertex {\it critical} as well. 
The complement of a  $(C_4,C_5,K_4,\overline{2P_1+P_2})$-free graphs is $(C_5,4P_1,2P_1+P_2, 2P_2)$-free.
Hence by proving this co-\NP-hardness result we will have proven the theorem for {\sc Critical Vertex}.

As mentioned, we reduce from {\sc Monotone 1-in-3-SAT}, which is \NP-complete due to Lemma~\ref{l-np}.
Given an instance $\Phi$ of  {\sc Monotone 1-in-3-SAT} with clause set~$C$ and variable set $X$, we construct a graph $G=(V,E)$ as follows. For every clause $c\in C$, the clause gadget $G_c=(V_c,E_c)$ is a cycle of length~$7$. For $c=(x,y,z)$, we let three pairwise non-adjacent vertices $c(x),c(y),c(z)$ of $G_c$ correspond to the three variables $x,y,z$. 
We denote the other four vertices of $G_c$ by $a_i^c,1\le i\le 4$, so that $G_c=c(x)a_1^ca_2^cc(y)a_3^cc(z)a_4^cc(x)$.
For each variable $x\in X$ we let the variable gadget $Q_x$ consist of the triangle $c(x)c'(x)c''(x)c(x)$, where $c,c',c''$ are the three clauses containing~$x$. See Figure~\ref{fig:clause} for an illustration of the construction.
We observe that $\vert{V(G)}\vert=7n$ and that $G$ is $(C_4,C_5,K_4,\overline{2P_1+P_2})$-free with $\omega(G)=3$.

\begin{figure}[ht!]
\begin{center}
\includegraphics[keepaspectratio=true, width=8cm]{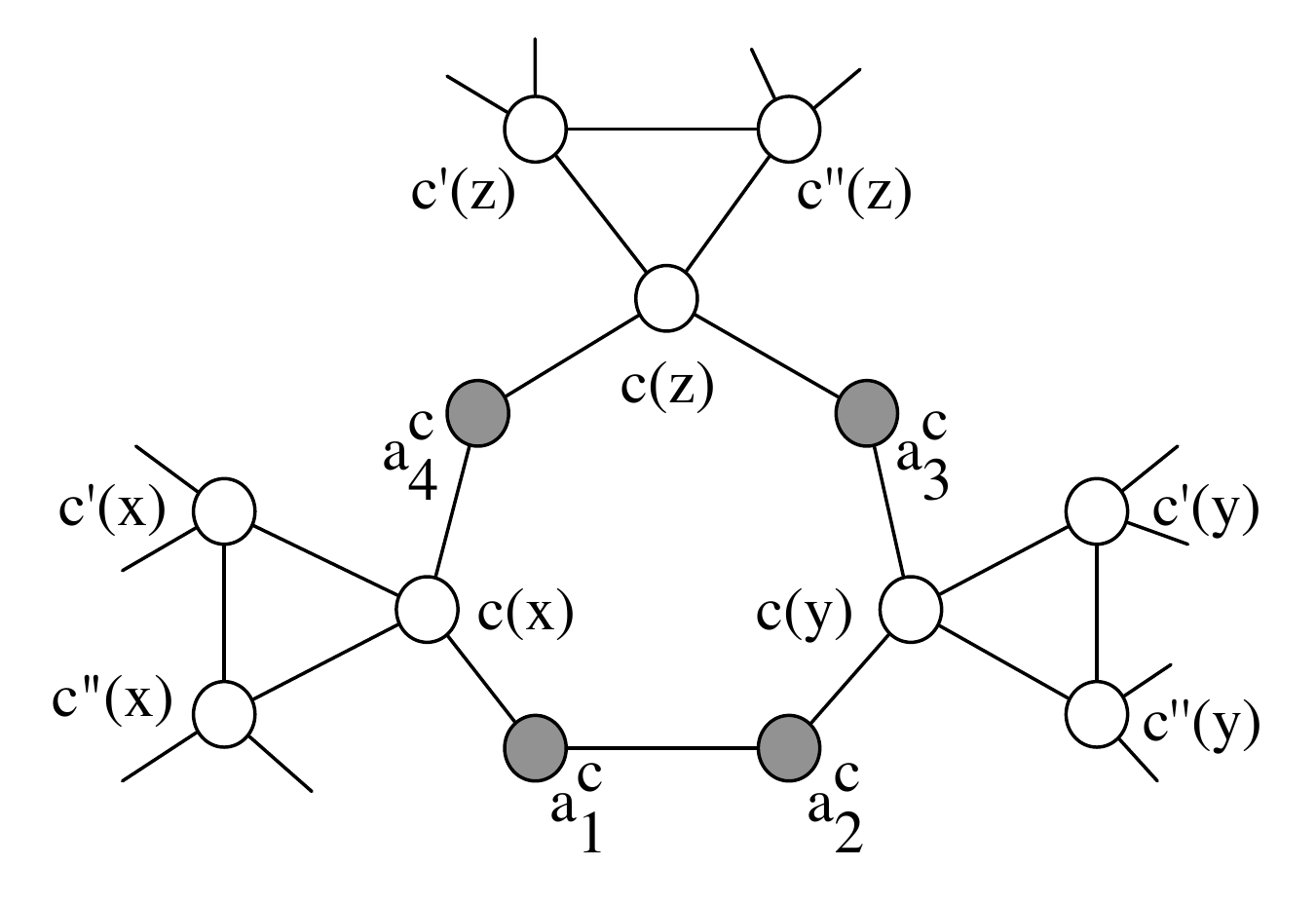}
\caption{The clause gadget $G_c$ and three variable gadgets $Q_x$, $Q_y$ and $Q_z$.}\label{fig:clause}
\end{center}
\end{figure}
\noindent

In order to prove co-\NP-hardness we first need to deduce a number of properties of our gadget. We do this via a number of claims.

\medskip
\noindent
{\it Claim~1.} 
There exists a minimum clique cover of $G$, in which each $a_i^c$ is covered by a clique of size~2, and moreover, every two vertices $a_1^c$ and $a_2^c$ belong to the same (2-vertex) clique.

\medskip
\noindent
We prove Claim~1 as follows. Let ${\cal K}$ be a minimum clique cover of $G$. Suppose two vertices $a_1^c$ and $a_2^c$ belong to two different cliques $K$, $K'$ of ${\cal K}$. 
If one of $K$, $K'$ has size~1, say $K=\{a^1_c\}$, then we can replace $K$ and $K'$ by $\{a_1^c,a_2^c\}$ and $K'\setminus \{a_2^c\}$. This yields a new minimum size clique cover of $G$, in which $a_1^c$ and $a_2^c$ belong to the same clique.
Alternatively, if $K$ and $K'$ each have size~2, then $K=\{a_1^c,c(x)\}$ and $K'=\{a_2^c,c(y)\}$. Then, by construction, ${\cal K}$ contains 
a clique that either consists of $a_3^c$ or of $a_4^c$ , say $a_4^c$. We replace the cliques $\{a_4^c\}$ and $K$ by $\{a_4^c,c(x)\}$ and $\{a_1^c\}$, respectively, and return to the previous situation.
Hence we may assume without loss of generality that $\{a_1^c,a_2^c\}$ is a clique in~$K$.
This means that if $a_3^c$ or $a_4^c$ forms a 1-vertex clique in ${\cal K}$, then we can safely add $c(y)$ or $c(x)$, respectively, to it. This proves Claim~1.

\medskip
\noindent
Now let ${\cal K}$ be a minimum clique cover. By Claim~1, we may assume without loss of generality that each $a_i^c$ is covered by a clique of size~2, and moreover, that every two vertices $a_1^c$ and $a_2^c$ belong to the same (2-vertex) clique.
Since the clause gadgets~$G_c$ are pairwise non-intersecting and isomorphic to $C_7$, it takes at least four cliques to cover the vertices of every~$G_c$. 
This means that exactly $3n$ cliques are needed to cover the $4n$ vertices $a_i^c$. By construction, we also find that $2n$ vertices $c(x)$ are covered by these cliques. Since $\omega(G)=3$, at least $n/3$ other cliques are necessary to cover the $n$ remaining vertices~$c(x)$. Hence, ${\cal K}$ has size at least ${10\over 3}n$, that is, 
$$\sigma({G})\ge{10\over 3}n.$$

We now prove three more claims.

\medskip
\noindent
{\it Claim~2.}
$\Phi$ is 1-satisfiable if and only if $\sigma({G})={10\over 3}n$.

\medskip
\noindent
We prove Claim~2 as follows. First suppose $\Phi$ is 1-satisfiable. We construct a clique cover~${\cal K}$ in the following way.
 If $x$ is true, then we let ${\cal K}$ contain the triangle $Q_x$.  Since each clause $c$ contains exactly one true variable for each $G_c$, exactly one vertex of $G_c$ is covered by a variable gadget. Then ${\cal K}$ contains three cliques of size~2 covering the six other vertices of~$G_c$. 
Hence ${\cal K}$ has size ${10\over 3}n$. As $\sigma({G})\ge{10\over 3}n$, this implies that
$\sigma({G})={10\over 3}n$.

Now suppose $\sigma({G})={10\over 3}n$. 
Let ${\cal K}$ be a minimum clique cover of~$G$.
 By Claim~1, we may assume without loss of generality that each $a_i^c$ is covered by a clique of size~2, and moreover, that every two vertices $a_1^c$ and $a_2^c$ belong to the same (2-vertex) clique. Then at least $n/3$ other cliques are necessary to cover the vertices $c(x)$ that are not in a 2-vertex clique with a vertex $a_i^c$.
Hence, as  $\sigma({G})={10\over 3}n$,
these vertices are covered by exactly $n/3$ triangles, each one corresponding to one variable $x$ (these are the only triangles in $G$). We assign the value true to a variable $x\in X$ if and only if its corresponding triangle $Q_x$ is in the clique cover. Then, for each $c\in C$, exactly one variable is true, namely the one that corresponds to the unique vertex of $G_c$ covered by a triangle. So $\Phi$ is 1-satisfiable. This completes the proof of Claim~2.

\medskip
\noindent
{\it Claim~3.}
If $G$ has a clique cover ${\cal K}=\{K_1,\ldots,K_{{10\over 3}n}\}$, then each $K_i\in\cal K$ 
consists of either two or three vertices.

\medskip
\noindent
We prove Claim~3 as follows.
As $\sigma({G})\ge{10\over 3}n$ and $\vert{\cal K}\vert={10\over 3}n$, we find that ${\cal K}$ is a minimum clique cover.
With each $v\in V$, we associate a {\it weight} $w_v\ge 0$ as follows. For $K_i\in \cal K$ and $v\in K_i$, we define $w_v={1/\vert K_i\vert}$. Since $\omega({G})= 3$ we have $w_v\in\{{1\over 3},{1\over 2},1\}$. So we have 
\[\displaystyle\sum_{{G_c}}\sum_{v\in V_c}w_v=\sum_{v\in V}w_v= \sum_{i=1}^{{10\over 3}n}\sum_{v\in K_i}w_v={10\over 3}n, \] 
where the first equality holds, because the clause gadgets $G_c$ are vertex-disjoint. We show that for every $c$ we have $\Sigma_{v\in V_c}w_v\ge{10\over 3}$. Since every $a_i^c$ has exactly two neighbours and these neighbours are not adjacent, we have $w_{a_i^c}\in \{{1\over 2},1\}$. If there exists an index~$i$ such that $w_{a_i^c}=1$, then 
\[\displaystyle\sum_{v\in V_c}w_v\ge1+3\times{1\over 2}+3\times{1\over 3}={7\over 2}>{10\over 3}.\] Now if $a_i^c$ has weight $w_{a_i^c}={1\over 2}$ for each $1\leq i\leq 4$, then $a_i^c$ is covered by a clique of size~$2$ and the second vertex of this clique has weight ${1\over 2}$ as well by definition. Thus if $w_{a_i^c}={1\over 2}$, exactly two among $c(x),c(y),c(z)$ have weight $1\over 2$. It follows that 
\[ \displaystyle\sum_{v\in V_c}w_v\ge 4\times{1\over 2}+2\times{1\over 2}+{1\over 3}={10\over 3}.\]
Hence $\sum_{v\in V_c}w_v={10\over 3}$ if and only if each vertex of $G_c$ is in a clique of size 2 or 3.
Since $\sum_{{G_c}}\sum_{v\in V_c}w_v={10\over 3}n$, we obtain $\sum_{v\in V_c}w_v={10\over 3}$ for every $c\in C$.
We conclude that each clique in ${\cal K}$ is of size $2$ or $3$. 
This completes the proof of Claim~3.

\medskip
\noindent
{\it Claim~4.}
If $\sigma(G)>{10\over 3}n$, then $G$ has a minimum clique cover ${\cal K}$ that contains a clique of size~1.

\medskip
\noindent
We prove Claim~4 as follows.
Suppose $\sigma(G)>{10\over 3}n$. 
For contradiction, assume that every minimum clique cover of $G$ has no clique of size~1.
Let ${\cal K}$ be a minimum clique cover of $G$.
By Claim~1, we may assume without loss of generality that each $a_i^c$ is covered by a clique of size~2, and moreover, that every two vertices $a_1^c$ and $a_2^c$ belong to the same (2-vertex) clique. Hence the remaining vertex $c(x)$ is covered by  some clique $K_i\in {\cal K}$,  such that either $K_i=\{c(x),c'(x)\}$ or  $K_i=\{c(x),c'(x),c''(x)\}$. 

If $K_i=\{c(x),c'(x)\}$, then $c''(x)$ is covered by some clique $K_j=\{c''(x),a\}$. However, then we can take $K_i=\{c(x),c'(x),c''(x)\}$ and $K_j=\{a\}$ to obtain a minimum clique cover with $\vert K_j\vert=1$, a contradiction. Hence $K_i=\{c(x),c'(x),c''(x)\}$. As this holds for every $G_c$ we find that $\sigma(G)={10\over 3}n$, a contradiction.
This completes the proof of Claim~4.

\medskip
\noindent
We claim that $\Phi$ is a 1-satisfiable if and only if $G$ has no critical vertex.
First suppose that $\Phi$ is 1-satisfiable.
By Claims~2 and~3 we find that $\sigma(G)={10\over 3}n$ and every clique in any minimum clique cover of $G$ has size greater than $1$. Hence, there is no vertex $u$ of~$G$ with $\sigma(G-u)\leq \sigma(G)-1$, that is, $G$ has no critical vertex.

Now suppose that $\Phi$ is not 1-satisfiable. By Claims~2 and~4 
we find that $\sigma(G)>{10\over 3}n$ and that there exists a minimum clique cover that contains a clique~$\{u\}$ of size~1. This 
means that $\sigma(G-u)=\sigma(G)-1$. So $u$ is a critical vertex.

\medskip
\noindent
We are left to consider the {\sc Critical Edge} problem. We use the same construction as before except that the cycles $G_c$ are isomorphic to $C_{11}$. To be more precise, we let $G_c=c(x)a_1^ca_2^cc(y)a_3^ca_4^ca_5^cc(z)a_6^ca_7^ca_8^cc(x)$.
Again the resulting graph~$G$ is $(C_4,C_5,K_4,\overline{2P_1+P_2})$-free.
 By using the same arguments as before we find that if $\Phi$ is 1-satisfiable, then every clique in any minimum clique cover of $G$ has size greater than $1$. Hence, as $G$ is $K_4$-free, every clique in any minimum clique cover of $G$ has size~2 or~3. 
Since $G$ is $\overline{2P_1+P_2}$-free, we cannot merge two cliques into one by adding a new edge.
So $\overline{G}$ has no critical edge. 

Now suppose that $\Phi$ is not 1-satisfiable. Then using the previous arguments we can prove that there exists a minimum clique cover~${\cal K}$ that contains a clique $\{u\}$ of size~1. By the adjusted construction of $G_c$ we find that $u$ is adjacent to exactly one vertex of a 2-vertex clique $\{v,w\}$ of ${\cal K}$, say $u$ is adjacent to $v$ but not to $w$. Then by adding the edge $uw$, 
which yields the graph~$G+uw$, 
we merge two cliques into one, meaning that
$\sigma(G+uw)=\sigma(G)-1$. So~$uv$ is a critical edge of~$\overline{G}$. This completes the proof of Theorem~\ref{t-main}.\qed
\end{proof}

\section{The Proof of Theorem~\ref{t-critical}}\label{s-main}

We are now ready to prove Theorem~\ref{t-critical}, which we restate below.

\medskip
\noindent
{\bf Theorem~\ref{t-critical}.}
{\it If a graph $H\ssi  P_4$ or of $H\ssi P_1+ P_3$, then
{\sc Critical Vertex}, {\sc Critical Edge} and {\sc Contraction-Critical Edge} restricted to $H$-free graphs are polynomial-time solvable, otherwise they are \NP-hard or co-\NP-hard.}

\begin{proof}
Let $H\ssi P_1+P_3$ or $H\ssi P_4$. Let $G$ be an $H$-free graph. 
By Theorem~\ref{t-dicho} we can compute $\chi(G)$ in polynomial time. We note that any vertex deletion  results in a graph that is $H$-free as well. Hence in order to solve {\sc Critical Vertex} we can compute the chromatic number of
$G-v$ for each vertex $v$ in polynomial time and compare it with $\chi(G)$.
As $(P_1+P_3)$-free graphs and $P_4$-free graphs are closed under edge contraction as well, we can follow the same approach for solving {\sc Contraction-Critical Edge}. 
By Proposition~\ref{p-equivalent} we obtain the same result for {\sc Critical Edge}.

Now suppose that neither $H\ssi P_1+P_3$ nor $H\ssi P_4$. If $H$ has a cycle or an induced claw, then we use 
Theorem~\ref{t-known}. Assume not. Then $H$ is a disjoint union of $r$ paths for some $r\geq 1$. If $r\geq 4$ we use Theorem~\ref{t-main}. If $r=3$ then either $H=3P_1 \ssi P_1+P_3$, which is not possible, or
$H\si 2P_1+P_2$ and we can apply Theorem~\ref{t-main} again. 
Suppose $r=2$. If both paths contain an edge, then $2P_2\ssi H$. If at most one path has edges, then it must have
at least four vertices, as otherwise $H\ssi P_1+P_3$. This means that $2P_1+P_2\ssi H$. In both cases we apply Theorem~\ref{t-main}.
If $r=1$, then $H$ is a path on at least five vertices, which means $2P_2\ssi H$. We
apply Theorem~\ref{t-main} again.\qed
\end{proof}


\begin{thebibliography}{99}\label{bibliography}

\bibitem{BBPR}
C. Bazgan, C. Bentz, C. Picouleau and B. Ries, \emph{Blockers for the stability number and the chromatic number}, Graphs and Combinatorics 31 (2015) 73--90.

\bibitem{BTT11}
C. Bazgan, S. Toubaline and Z. Tuza, 
\emph{The most vital nodes with respect to independent set and vertex cover}, 
Discrete Applied Mathematics 159 (2011) 1933--1946.
 
\bibitem{Bentz}
C. Bentz, M.-C. Costa, D. de Werra, C. Picouleau and B. Ries,
\emph{Weighted transversals and blockers for some optimization problems in graphs},
Progress in Combinatorial Optimization, Wiley-ISTE, 2012.

\bibitem{BHS09}
D.~Bruce, C.T.~Ho\`ang, and J.~Sawada, A certifying algorithm for
3-colorability of $P_5$-free graphs, Proc. ISAAC 2009, Lecture Notes in Computer Science 5878, 595--604.

\bibitem{CGSZ}
M. Chudnovsky, J. Goedgebeur, O. Schaudt and M. Zhong,
Obstructions for three-coloring and list three-coloring H-free graphs,
Manuscript, arXiv:1703.05684.

\bibitem{CGSZ15}
M. Chudnovsky, J. Goedgebeur, O. Schaudt and M. Zhong,
\emph{Obstructions for three-coloring graphs without induced paths on six vertices}, Proc. SODA 2016,  1774--1783.

\bibitem{CWP11}
M.-C. Costa, D. de Werra, C. Picouleau, \emph{Minimum $d$-blockers and $d$-transversals in graphs}, Journal of Combinatorial Optimization 22 (2011) 857--872.

\bibitem{DHHMMP}
H.S. Dhaliwal, A.M. Hamel, C.T. Ho\`ang, F. Maffray, T.J.D. McConnell and S.A. Panait,
\emph{On color-critical $(P_5,\overline{P_5})$-free graphs}, Discrete Applied Mathematics 216 (2017) 142--148.

\bibitem{DPPR15}
O. Diner, D. Paulusma, C. Picouleau and B. Ries, \emph{Contraction blockers for graphs with forbidden induced paths}, Proc. CIAC 2015,  Lecture Notes in Computer Science 9079 (2015) 194--207.

\bibitem{GS15}
J. Goedgebeur and O. Schaudt, \emph{Exhaustive generation of $k$-critical $H$-free graphs}, Proc. WG 2016, Lecture Notes in Computer Science 9941, 109--120.

\bibitem{GJPS17}
P.A. Golovach, M. Johnson, D. Paulusma and J. Song, \emph{A survey on the computational complexity of colouring graphs with forbidden subgraphs}, Journal of Graph Theory 84 (2017) 331--363.

\bibitem{HH13}
P. Hell and S. Huang, \emph{Complexity of coloring graphs without paths and cycles}, 
Discrete Applied Mathematics 216 (2017) 211--232.

\bibitem{HMRSV15}
C.T. Ho\`ang, B. Moore, D. Recoskie, J. Sawada and M. Vatshelle,
\emph{Constructions of $k$-critical $P_5$-free graphs}, Discrete Applied Mathematics 182 (2015) 91--98.

\bibitem{KKTW01}
D.~Kr\'al', J.~Kratochv\'{\i}l, Z.~Tuza, and G.J.~Woeginger,
\emph{Complexity of coloring graphs without forbidden induced subgraphs},
Proc. WG 2001, Lecture Notes in Computer Science 2204 (2001) 254--262.

\bibitem{Lo73}
L. Lov\'asz, \emph{Coverings and coloring of hypergraphs}, Proc. 4th Southeastern Conference on Combinatorics, Graph Theory, and Computing, Utilitas Math. (1973) 3--12.

\bibitem{MM12}
F. Maffray and G. Morel, \emph{On 3-Colorable $P_5$-free graphs}, 
SIAM Journal on Discrete Mathematics 26 (2012) 1682--1708.

\bibitem{MP96}
F. Maffray and M. Preissmann,
\emph{On the \NP-completeness of the $k$-colorability problem for triangle-free graphs},
Discrete Mathematics 162 (1996) 313--317.

\bibitem{MR01}
C. Moore, and J. M. Robson, \emph{Hard Tiling Problems with Simple Tiles}, Discrete and Computational Geometry 26 (2001) 573--590.

\bibitem{PBP}
F.M. Pajouh, V. Boginski and E. L. Pasiliao, \emph{Minimum vertex blocker clique problem}, Networks 64 (2014) 48--64.

\bibitem{PPR17b}
D. Paulusma, C. Picouleau and B. Ries, \emph{Reducing the chromatic number by vertex or edge deletions}, Proc. Lagos, Electronic Notes in Discrete Mathematics, to appear.

\bibitem{PPR16}
D. Paulusma, C. Picouleau and B. Ries, \emph{Reducing the clique and chromatic number via edge contractions and vertex deletions}, Proc. ISCO 2016, Lecture Notes in Computer Science 9849 (2016) 38-49.

\bibitem{PPR17}
D. Paulusma, C. Picouleau and B. Ries, \emph{Blocking independent sets for $H$-free graphs via edge contractions and vertex deletions}, Proc. TAMC 2017, Lecture Notes in Computer Science 10185 (2017) 470--483.

\bibitem{RBPDCZ10}
B. Ries, C. Bentz, C. Picouleau, D. de Werra, M.-C. Costa and R. Zenklusen, \emph{Blockers and transversals in some subclasses of bipartite graphs: when caterpillars are dancing on a grid}, Discrete Mathematics 310 (2010) 132--146.

\bibitem{West}
D. B. West,  
Introduction to Graph Theory, Prentice-Hall (1996).

\end{thebibliography}
\end{document}